\titlespacing{\section}{10pt}{3pt plus 1pt minus 1pt}{2pt plus 1pt minus 1pt}
\titlespacing\subsection{0pt}{1pt}{1pt plus 1pt minus 1pt}
    \newcolumntype{P}[1]{>{\centering\hspace{0pt}\arraybackslash}p{#1}}
    \newcolumntype{M}[1]{>{\centering\hspace{0pt}\arraybackslash}m{#1}}
    \newcolumntype{L}{>{\centering\arraybackslash}m{3cm}}
\def\BibTeX{{\rm B\kern-.05em{\sc i\kern-.025em b}\kern-.08em
    T\kern-.1667em\lower.7ex\hbox{E}\kern-.125emX}}
\setlist{nolistsep,leftmargin=.6cm}
 \newcommand{\ind}{\perp\!\!\!\!\perp}
\newtheorem{definition}{Definition}
\newtheorem{lemma}{Lemma}
\newtheorem*{theorem*}{Theorem}
\newtheorem{theorem}{Theorem}
\begin{document}
\title{Active User Identification in Fast Fading Massive Random Access Channels}

\author{\IEEEauthorblockN{Jyotish Robin, Elza Erkip}
\IEEEauthorblockA{\textit{Dept. of Electrical and Computer Engineering, New York University, Brooklyn, NY, USA}}
}
\maketitle
\begin{abstract} Reliable and prompt identification of active users is critical for enabling random access in  massive machine-to-machine type networks which typically operate within stringent access delay and energy constraints.  In this paper, an energy efficient active user identification protocol is envisioned in which the  active users simultaneously transmit  On-Off Keying (OOK) modulated preambles whereas the  base station uses non-coherent detection to avoid the channel estimation overheads. The minimum number of channel-uses required for active user identification in the asymptotic regime of total number of users $\ell$ when the number of active devices $k$ scales as $k=\Theta(1)$  is characterized along with an achievability scheme relying on the equivalence of activity detection to a group testing problem. A practical scheme for active user identification based on a belief propagation strategy is also proposed and its performance is compared against the theoretical bounds. 
\end{abstract}
\vspace{-0cm}

\begin{IEEEkeywords}
  massive random access, activity detection, active user identification, IoT, group testing.
\end{IEEEkeywords}

\section{introduction}  In contrast to the traditional cellular systems, massive machine-type communication (mMTC) networks supporting ultra-reliable low-latency applications  have to tackle  the novel challenges posed by application-specific constraints and dynamic service requirements \cite{7736615}. Typically,  mMTC networks are  comprised of sensors, actuators, etc.  monitoring critical information which leads to a sparse and sporadic data traffic since only a small subset of devices are active at any given time instant \cite{9023459}. Since the data transmissions are infrequent, energy conserving modulation schemes such as  On-Off Keying (OOK) are recommended for battery operated mMTC applications \cite{5937958}.
In addition, mMTC systems are severely prone to security threats such as jamming, spoofing, etc. as the devices have limited computational capabilities \cite{8386824}.  To combat jamming attacks, spread-spectrum techniques have been considered for commercial networks \cite{7036828}. For example,  many mMTC architectures rely on fast frequency hopping where symbols experience independent fading to provide security against jamming \cite{9136922}.  

To properly allocate resources and to decode data in mMTC systems, the BS needs to accurately identify the active devices as misdetections can lead to critical data loss \cite{ALTURJMAN2017299}. In  this paper, we focus on the active device identification  problem in  mMTC systems with stringent requirements in terms of latency, reliability, energy efficiency, and
security. In dense M2M networks, grant-based random access (RA)  results in excessive preamble collisions drastically increasing the signaling overheads whereas grant-free schemes suffer from severe co-channel interference due to the non-orthogonality of preambles \cite{9060999}. Thus, active device identification in massive mMTC systems requires novel strategies.

 Several recent works \cite{8734871,7952810,7282735}  exploit the sporadic device activity to employ compressed sensing techniques such as Approximate Message Passing (AMP) \cite{8734871,7952810} and sparse graph codes \cite{7282735} to detect active devices. In AMP based approaches, preamble sequences are typically generated from i.i.d. Gaussian distributions which requires devices to transmit during all channel-uses consuming more energy in comparison to OOK signaling schemes \cite{5937958}. In \cite{8262800}, an RA protocol for short packet transmission was presented by viewing activity detection as a  group testing problem. Our previous works in \cite{9500808,9593173} address the activity detection problem  taking into account the  access delay and energy constraints. However, most of the existing literature assumes a block-fading channel model where the channel is static in each block in which channel state information (CSI) can be accurately obtained at the BS through pilots before device identification. For secure mMTC networks such as the fast frequency hopping systems, a more accurate channel model would incorporate fast fading where reliable channel estimates cannot be obtained \cite{923716}. Lack of reliable CSI leads to poor decoding performance with coherent detection methods and hence  alternatives that do not require channel estimation are often more efficient \cite{7514754}.

In  this paper, unlike the coherent detection approaches, we focus on identifying the  devices using a threshold-based energy detection which is  a low-complex energy efficient  non-coherent method \cite{5174497}. Active devices transmit their OOK modulated unique preambles synchronously and  BS produces a binary output  based on the received energy without requiring precise CSI. Our aim is to estimate the set of active devices reliably using the  binary energy detector outputs.

We argue that the above active device identification framework  can be viewed as an instance of the group testing (GT) problem 
\cite{10.2307/2284447, 6157065} where the aim is to identify a small set of ``defective''
items within a large population by performing tests on pools of items. A test
is positive if the pool contains at least one defective, and negative otherwise. Though connection between GT and RA  has previously been studied \cite{1096146,1057026}, the channel models used are often unrealistic for many mMTC systems such as the non-coherent ones described above. In our paper, we bridge this gap by using a non-coherent GT model. The main contributions of this paper are as follows:
\begin{itemize}
\item 
We propose a novel non-coherent activity detection framework  for  mMTC networks in which active devices jointly transmit their OOK preambles  and  the BS uses a non-coherent energy detection to identify the active devices.

\item We use  information theoretic  tools to characterize the minimum user identification cost defined as the minimum number of channel-uses required to identify the active users \cite{robin2021capacity}.  Using GT, we derive an achievable minimum user identification cost for our framework in the $k=\Theta(1)$ regime where $k$ is the number of active devices.

\item We present a  practical non-coherent device identification scheme based on Belief Propagation (BP) which employs bitwise-Maximum Aposteriori Probability (MAP) detection \cite{sejdinovic2010note}. We evaluate its  performance in terms of user identification cost w.r.t our theoretical results.
   \end{itemize}

\section{System model}~\label{sec:sysmodl} 
Consider an mMTC network consisting of $\ell$ users belonging to the set $\mathcal{D}=\{1,2,\ldots,\ell\}$ out of which only $k$  are active and wants to access the channel. We assume that the value of $k$ is estimated at the BS \textit{a-priori} \cite{4085381}. The active user set  denoted by $\mathcal{A}=\{a_1,a_2\ldots, a_k\}$  is assumed to be randomly chosen among all ${\ell \choose k} $ subsets of size $k$ from $\mathcal{D}$ and is unknown to the BS. The BS aims to identify the set $\mathcal{A}$ in a time-efficient manner. 
\begin{figure}   
	\centering
	\resizebox{3.4in}{!}{
\tikzset{every picture/.style={line width=0.95pt}} 

\begin{tikzpicture}[x=1pt,y=0.99pt,yscale=-.9,xscale=.99]

\draw  [dash pattern={on 0.84pt off 2.51pt}]  (55.67,152.17) -- (56.08,193.14) -- (56.67,251.17) ;
\draw    (153,62) -- (199.67,62.16) ;
\draw [shift={(201.67,62.17)}, rotate = 180.2] [color={rgb, 255:red, 0; green, 0; blue, 0 }  ][line width=0.75]    (10.93,-3.29) .. controls (6.95,-1.4) and (3.31,-0.3) .. (0,0) .. controls (3.31,0.3) and (6.95,1.4) .. (10.93,3.29)   ;
\draw    (151,137) -- (197.67,137.16) ;
\draw [shift={(199.67,137.17)}, rotate = 180.2] [color={rgb, 255:red, 0; green, 0; blue, 0 }  ][line width=0.75]    (10.93,-3.29) .. controls (6.95,-1.4) and (3.31,-0.3) .. (0,0) .. controls (3.31,0.3) and (6.95,1.4) .. (10.93,3.29)   ;
\draw    (151,268) -- (197.67,268.16) ;
\draw [shift={(199.67,268.17)}, rotate = 180.2] [color={rgb, 255:red, 0; green, 0; blue, 0 }  ][line width=0.75]    (10.93,-3.29) .. controls (6.95,-1.4) and (3.31,-0.3) .. (0,0) .. controls (3.31,0.3) and (6.95,1.4) .. (10.93,3.29)   ;
\draw   (225.72,51.59) .. controls (231.16,57.15) and (230.77,66.35) .. (224.85,72.15) .. controls (218.93,77.94) and (209.72,78.14) .. (204.28,72.58) .. controls (198.84,67.02) and (199.23,57.82) .. (205.15,52.02) .. controls (211.07,46.22) and (220.28,46.03) .. (225.72,51.59) -- cycle ; \draw   (225.72,51.59) -- (204.28,72.58) ; \draw   (224.85,72.15) -- (205.15,52.02) ;
\draw   (227.72,125.59) .. controls (233.16,131.15) and (232.77,140.35) .. (226.85,146.15) .. controls (220.93,151.94) and (211.72,152.14) .. (206.28,146.58) .. controls (200.84,141.02) and (201.23,131.82) .. (207.15,126.02) .. controls (213.07,120.22) and (222.28,120.03) .. (227.72,125.59) -- cycle ; \draw   (227.72,125.59) -- (206.28,146.58) ; \draw   (226.85,146.15) -- (207.15,126.02) ;
\draw   (225.72,259.59) .. controls (231.16,265.15) and (230.77,274.35) .. (224.85,280.15) .. controls (218.93,285.94) and (209.72,286.14) .. (204.28,280.58) .. controls (198.84,275.02) and (199.23,265.82) .. (205.15,260.02) .. controls (211.07,254.22) and (220.28,254.03) .. (225.72,259.59) -- cycle ; \draw   (225.72,259.59) -- (204.28,280.58) ; \draw   (224.85,280.15) -- (205.15,260.02) ;
\draw    (214.67,233.17) -- (214.06,252) ;
\draw [shift={(214,254)}, rotate = 271.83] [color={rgb, 255:red, 0; green, 0; blue, 0 }  ][line width=0.75]    (10.93,-3.29) .. controls (6.95,-1.4) and (3.31,-0.3) .. (0,0) .. controls (3.31,0.3) and (6.95,1.4) .. (10.93,3.29)   ;
\draw    (215.67,103.17) -- (215.96,120) ;
\draw [shift={(216,122)}, rotate = 268.99] [color={rgb, 255:red, 0; green, 0; blue, 0 }  ][line width=0.75]    (10.93,-3.29) .. controls (6.95,-1.4) and (3.31,-0.3) .. (0,0) .. controls (3.31,0.3) and (6.95,1.4) .. (10.93,3.29)   ;
\draw    (214.67,26.17) -- (214.96,43) ;
\draw [shift={(215,45)}, rotate = 268.99] [color={rgb, 255:red, 0; green, 0; blue, 0 }  ][line width=0.75]    (10.93,-3.29) .. controls (6.95,-1.4) and (3.31,-0.3) .. (0,0) .. controls (3.31,0.3) and (6.95,1.4) .. (10.93,3.29)   ;
\draw    (228,69) -- (351.06,159.98) ;
\draw [shift={(352.67,161.17)}, rotate = 216.48] [color={rgb, 255:red, 0; green, 0; blue, 0 }  ][line width=0.75]    (10.93,-3.29) .. controls (6.95,-1.4) and (3.31,-0.3) .. (0,0) .. controls (3.31,0.3) and (6.95,1.4) .. (10.93,3.29)   ;
\draw    (232.67,142.17) -- (350.71,167.74) ;
\draw [shift={(352.67,168.17)}, rotate = 192.23] [color={rgb, 255:red, 0; green, 0; blue, 0 }  ][line width=0.75]    (10.93,-3.29) .. controls (6.95,-1.4) and (3.31,-0.3) .. (0,0) .. controls (3.31,0.3) and (6.95,1.4) .. (10.93,3.29)   ;
\draw    (229.67,269.17) -- (353.07,176.37) ;
\draw [shift={(354.67,175.17)}, rotate = 143.06] [color={rgb, 255:red, 0; green, 0; blue, 0 }  ][line width=0.75]    (10.93,-3.29) .. controls (6.95,-1.4) and (3.31,-0.3) .. (0,0) .. controls (3.31,0.3) and (6.95,1.4) .. (10.93,3.29)   ;
\draw  [dash pattern={on 0.84pt off 2.51pt}]  (213.67,155.17) -- (214.67,233.17) ;
\draw   (352.67,168.17) .. controls (352.67,154.36) and (363.86,143.17) .. (377.67,143.17) .. controls (391.47,143.17) and (402.67,154.36) .. (402.67,168.17) .. controls (402.67,181.97) and (391.47,193.17) .. (377.67,193.17) .. controls (363.86,193.17) and (352.67,181.97) .. (352.67,168.17) -- cycle ;
\draw    (376.67,103.17) -- (377.62,141.17) ;
\draw [shift={(377.67,143.17)}, rotate = 268.57] [color={rgb, 255:red, 0; green, 0; blue, 0 }  ][line width=0.75]    (10.93,-3.29) .. controls (6.95,-1.4) and (3.31,-0.3) .. (0,0) .. controls (3.31,0.3) and (6.95,1.4) .. (10.93,3.29)   ;
\draw    (402.67,168.17) -- (443.67,168.17) ;
\draw [shift={(445.67,168.17)}, rotate = 180] [color={rgb, 255:red, 0; green, 0; blue, 0 }  ][line width=0.75]    (10.93,-3.29) .. controls (6.95,-1.4) and (3.31,-0.3) .. (0,0) .. controls (3.31,0.3) and (6.95,1.4) .. (10.93,3.29)   ;
\draw   (447,152.33) .. controls (447,146.17) and (452,141.17) .. (458.17,141.17) -- (505.83,141.17) .. controls (512,141.17) and (517,146.17) .. (517,152.33) -- (517,185.83) .. controls (517,192) and (512,197) .. (505.83,197) -- (458.17,197) .. controls (452,197) and (447,192) .. (447,185.83) -- cycle ;
\draw    (517.67,167.17) -- (556.67,167.17) ;
\draw [shift={(558.67,167.17)}, rotate = 180] [color={rgb, 255:red, 0; green, 0; blue, 0 }  ][line width=0.75]    (10.93,-3.29) .. controls (6.95,-1.4) and (3.31,-0.3) .. (0,0) .. controls (3.31,0.3) and (6.95,1.4) .. (10.93,3.29)   ;
\draw   (559,151.33) .. controls (559,145.17) and (564,140.17) .. (570.17,140.17) -- (640.5,140.17) .. controls (646.67,140.17) and (651.67,145.17) .. (651.67,151.33) -- (651.67,184.83) .. controls (651.67,191) and (646.67,196) .. (640.5,196) -- (570.17,196) .. controls (564,196) and (559,191) .. (559,184.83) -- cycle ;
\draw    (651.67,167.17) -- (668.67,168.06) ;
\draw [shift={(670.67,168.17)}, rotate = 183.01] [color={rgb, 255:red, 0; green, 0; blue, 0 }  ][line width=0.75]    (10.93,-3.29) .. controls (6.95,-1.4) and (3.31,-0.3) .. (0,0) .. controls (3.31,0.3) and (6.95,1.4) .. (10.93,3.29)   ;
\draw    (68,62) -- (112,62) ;
\draw    (112,62) -- (149,39) ;

\draw    (71,136) -- (115,136) ;
\draw    (115,136) -- (152,113) ;

\draw    (70,267) -- (114,267) ;
\draw    (114,267) -- (151,244) ;

\draw (44,48.4) node [anchor=north west][inner sep=0.75pt]   [font=\LARGE]{$X_{t}^{1}$};
\draw (47,120.4) node [anchor=north west][inner sep=0.75pt]   [font=\LARGE]{$X_{t}^{2}$};
\draw (46,256.4) node [anchor=north west][inner sep=0.75pt]   [font=\LARGE] {$X_{t}^{\ell}$};
\draw (219,18.4) node [anchor=north west][inner sep=0.75pt]  [font=\LARGE] {$h_t^{1}$};
\draw (220,95.4) node [anchor=north west][inner sep=0.75pt]  [font=\LARGE]  {$h_t^{2}$};
\draw (217,222.4) node [anchor=north west][inner sep=0.75pt] [font=\LARGE]  {$h_t^{\ell}$};
\draw (363,153.4) node [anchor=north west][inner sep=0.75pt]  [font=\huge]  {$\sum $};
\draw (367,80.4) node [anchor=north west][inner sep=0.75pt]  [font=\LARGE] {$w_{t}$};
\draw (414,143.4) node [anchor=north west][inner sep=0.75pt]  [font=\LARGE]  {$S_{t}$};
\draw (559,159) node [anchor=north west][inner sep=0.75pt]  [font=\LARGE][align=left] {Thresholding};
\draw (450,149.33) node [anchor=north west][inner sep=0.75pt]  [font=\LARGE] [align=left] {Envelope \\ detector};
\draw (527,145.4) node [anchor=north west][inner sep=0.75pt]  [font=\LARGE] {$Y_{t}$};
\draw (675,158.4) node [anchor=north west][inner sep=0.75pt][font=\LARGE]  {$Z_{t}$};
\draw (124,54.4) node [anchor=north west][inner sep=0.75pt]  [font=\LARGE]{$\beta _{1}$};
\draw (127,128.4) node [anchor=north west][inner sep=0.75pt]   [font=\LARGE]{$\beta _{2}$};
\draw (126,259.4) node [anchor=north west][inner sep=0.75pt] [font=\LARGE]  {$\beta _{\ell}$};

\end{tikzpicture}

}
	\setlength{\belowcaptionskip}{-7pt}
	\caption{Non-coherent $(\ell,k)$-Many Access Channel. }
	\label{fig:qtzn}
	\end{figure}
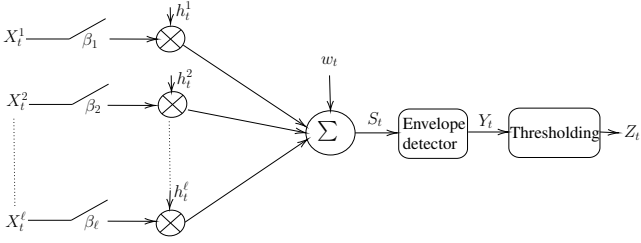
	
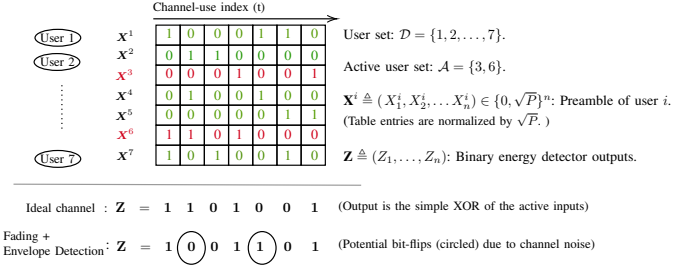
\begin{figure}   
	\centering
	\resizebox{3.6in}{!}{
\tikzset{every picture/.style={line width=0.85pt}} 

\begin{tikzpicture}[x=0.75pt,y=0.75pt,yscale=-1,xscale=.85]

\draw  [dash pattern={on 0.84pt off 2.51pt}]  (59,90) -- (59,132) ;
\draw   (28,68) .. controls (28,63.58) and (39.42,60) .. (53.5,60) .. controls (67.58,60) and (79,63.58) .. (79,68) .. controls (79,72.42) and (67.58,76) .. (53.5,76) .. controls (39.42,76) and (28,72.42) .. (28,68) -- cycle ;
\draw   (29,45) .. controls (29,40.58) and (40.42,37) .. (54.5,37) .. controls (68.58,37) and (80,40.58) .. (80,45) .. controls (80,49.42) and (68.58,53) .. (54.5,53) .. controls (40.42,53) and (29,49.42) .. (29,45) -- cycle ;

\draw   (29,160) .. controls (29,155.58) and (40.42,152) .. (54.5,152) .. controls (68.58,152) and (80,155.58) .. (80,160) .. controls (80,164.42) and (68.58,168) .. (54.5,168) .. controls (40.42,168) and (29,164.42) .. (29,160) -- cycle ;

\draw   (164,33) -- (191,33) -- (191,52) -- (164,52) -- cycle ;
\draw   (191,33) -- (218,33) -- (218,52) -- (191,52) -- cycle ;
\draw   (218,33) -- (245,33) -- (245,52) -- (218,52) -- cycle ;
\draw   (245,33) -- (272,33) -- (272,52) -- (245,52) -- cycle ;
\draw   (272,33) -- (299,33) -- (299,52) -- (272,52) -- cycle ;
\draw   (299,33) -- (326,33) -- (326,52) -- (299,52) -- cycle ;
\draw   (326,33) -- (353,33) -- (353,52) -- (326,52) -- cycle ;

\draw   (164,52) -- (191,52) -- (191,71) -- (164,71) -- cycle ;
\draw   (191,52) -- (218,52) -- (218,71) -- (191,71) -- cycle ;
\draw   (218,52) -- (245,52) -- (245,71) -- (218,71) -- cycle ;
\draw   (245,52) -- (272,52) -- (272,71) -- (245,71) -- cycle ;
\draw   (272,52) -- (299,52) -- (299,71) -- (272,71) -- cycle ;
\draw   (299,52) -- (326,52) -- (326,71) -- (299,71) -- cycle ;
\draw   (326,52) -- (353,52) -- (353,71) -- (326,71) -- cycle ;

\draw   (164,71) -- (191,71) -- (191,90) -- (164,90) -- cycle ;
\draw   (191,71) -- (218,71) -- (218,90) -- (191,90) -- cycle ;
\draw   (218,71) -- (245,71) -- (245,90) -- (218,90) -- cycle ;
\draw   (245,71) -- (272,71) -- (272,90) -- (245,90) -- cycle ;
\draw   (272,71) -- (299,71) -- (299,90) -- (272,90) -- cycle ;
\draw   (299,71) -- (326,71) -- (326,90) -- (299,90) -- cycle ;
\draw   (326,71) -- (353,71) -- (353,90) -- (326,90) -- cycle ;

\draw   (164,147) -- (191,147) -- (191,166) -- (164,166) -- cycle ;
\draw   (191,147) -- (218,147) -- (218,166) -- (191,166) -- cycle ;
\draw   (218,147) -- (245,147) -- (245,166) -- (218,166) -- cycle ;
\draw   (245,147) -- (272,147) -- (272,166) -- (245,166) -- cycle ;
\draw   (272,147) -- (299,147) -- (299,166) -- (272,166) -- cycle ;
\draw   (299,147) -- (326,147) -- (326,166) -- (299,166) -- cycle ;
\draw   (326,147) -- (353,147) -- (353,166) -- (326,166) -- cycle ;

\draw   (164,90) -- (191,90) -- (191,109) -- (164,109) -- cycle ;
\draw   (191,90) -- (218,90) -- (218,109) -- (191,109) -- cycle ;
\draw   (218,90) -- (245,90) -- (245,109) -- (218,109) -- cycle ;
\draw   (245,90) -- (272,90) -- (272,109) -- (245,109) -- cycle ;
\draw   (272,90) -- (299,90) -- (299,109) -- (272,109) -- cycle ;
\draw   (299,90) -- (326,90) -- (326,109) -- (299,109) -- cycle ;
\draw   (326,90) -- (353,90) -- (353,109) -- (326,109) -- cycle ;

\draw   (164,109) -- (191,109) -- (191,128) -- (164,128) -- cycle ;
\draw   (191,109) -- (218,109) -- (218,128) -- (191,128) -- cycle ;
\draw   (218,109) -- (245,109) -- (245,128) -- (218,128) -- cycle ;
\draw   (245,109) -- (272,109) -- (272,128) -- (245,128) -- cycle ;
\draw   (272,109) -- (299,109) -- (299,128) -- (272,128) -- cycle ;
\draw   (299,109) -- (326,109) -- (326,128) -- (299,128) -- cycle ;
\draw   (326,109) -- (353,109) -- (353,128) -- (326,128) -- cycle ;

\draw   (164,128) -- (191,128) -- (191,147) -- (164,147) -- cycle ;
\draw   (191,128) -- (218,128) -- (218,147) -- (191,147) -- cycle ;
\draw   (218,128) -- (245,128) -- (245,147) -- (218,147) -- cycle ;
\draw   (245,128) -- (272,128) -- (272,147) -- (245,147) -- cycle ;
\draw   (272,128) -- (299,128) -- (299,147) -- (272,147) -- cycle ;
\draw   (299,128) -- (326,128) -- (326,147) -- (299,147) -- cycle ;
\draw   (326,128) -- (353,128) -- (353,147) -- (326,147) -- cycle ;

\draw    (161,27) -- (364,27) ;
\draw [shift={(366,27)}, rotate = 180] [color={rgb, 255:red, 0; green, 0; blue, 0 }  ][line width=0.75]    (10.93,-3.29) .. controls (6.95,-1.4) and (3.31,-0.3) .. (0,0) .. controls (3.31,0.3) and (6.95,1.4) .. (10.93,3.29)   ;
\draw [color={rgb, 255:red, 155; green, 155; blue, 155 }  ,draw opacity=1 ]   (5,184) -- (393,185) ;
\draw   (188,244.5) .. controls (188,235.94) and (194.94,229) .. (203.5,229) .. controls (212.06,229) and (219,235.94) .. (219,244.5) .. controls (219,253.06) and (212.06,260) .. (203.5,260) .. controls (194.94,260) and (188,253.06) .. (188,244.5) -- cycle ;
\draw   (267,244.5) .. controls (267,235.94) and (273.94,229) .. (282.5,229) .. controls (291.06,229) and (298,235.94) .. (298,244.5) .. controls (298,253.06) and (291.06,260) .. (282.5,260) .. controls (273.94,260) and (267,253.06) .. (267,244.5) -- cycle ;

\draw (173,35) node [anchor=north west][inner sep=0.75pt]  [color={rgb, 255:red, 87; green, 155; blue, 7 }  ,opacity=1 ] [align=left] {1 \ \ 0 \ \ \ 0 \ \ 0 \ \ 1 \ \ \ 1 \ \ \ 0 \ };
\draw (173,55) node [anchor=north west][inner sep=0.75pt]  [color={rgb, 255:red, 36; green, 151; blue, 6 }  ,opacity=1 ] [align=left] {0 \ \ 1 \ \ \ 1 \ \ 0 \ \ 0 \ \ \ 0 \ \ \ 0 \ };
\draw (173,73) node [anchor=north west][inner sep=0.75pt]  [color={rgb, 255:red, 208; green, 2; blue, 27 }  ,opacity=1 ] [align=left] {0 \ \ 0 \ \ \ 0 \ \ 1 \ \ 0 \ \ \ 0 \ \ \ 1 \ };
\draw (173,93) node [anchor=north west][inner sep=0.75pt]  [color={rgb, 255:red, 79; green, 153; blue, 0 }  ,opacity=1 ] [align=left] {0 \ \ 1 \ \ \ 0 \ \ 0 \ \ 1 \ \ \ 0 \ \ \ 0 \ };
\draw (173,111) node [anchor=north west][inner sep=0.75pt]  [color={rgb, 255:red, 84; green, 156; blue, 0 }  ,opacity=1 ] [align=left] {0 \ \ 0 \ \ \ 0 \ \ 0  \ \ 0 \ \ \ 1 \ \ \ 1 \ };
\draw (173,150) node [anchor=north west][inner sep=0.75pt]  [color={rgb, 255:red, 81; green, 146; blue, 11 }  ,opacity=1 ] [align=left] {1 \ \ 0 \ \ \ 1 \ \ 0 \ \ 0 \ \ \ 1 \ \ \ 0 \ };
\draw (173,130) node [anchor=north west][inner sep=0.75pt]  [color={rgb, 255:red, 208; green, 2; blue, 27 }  ,opacity=1 ] [align=left] {1 \ \ 1 \ \ \ 0 \ \ 1 \ \ 0 \ \ \ 0 \ \ \ 0 \ };
\draw (33,154) node [anchor=north west][inner sep=0.75pt]  [font=\small] [align=left] {User $\displaystyle 7$};
\draw (33,39) node [anchor=north west][inner sep=0.75pt]  [font=\small] [align=left] {User 1};
\draw (33,62) node [anchor=north west][inner sep=0.75pt]  [font=\small] [align=left] {User 2};
\draw (118,37) node [anchor=north west][inner sep=0.75pt]  [font=\footnotesize] [align=left] {$\displaystyle \boldsymbol{X}^{1}$};
\draw (118,54) node [anchor=north west][inner sep=0.75pt]  [font=\footnotesize] [align=left] {$\displaystyle \boldsymbol{X}^{2}$};
\draw (117,74) node [anchor=north west][inner sep=0.75pt]  [font=\footnotesize] [align=left] {$\displaystyle \textcolor[rgb]{0.82,0.01,0.11}{\boldsymbol{X}^{3}}$};
\draw (118,130) node [anchor=north west][inner sep=0.75pt]  [font=\footnotesize] [align=left] {$\displaystyle \textcolor[rgb]{0.82,0.01,0.11}{\boldsymbol{X}^{6}}$};
\draw (117,112) node [anchor=north west][inner sep=0.75pt]  [font=\footnotesize] [align=left] {$\displaystyle \boldsymbol{X}^{5}$};
\draw (117,93) node [anchor=north west][inner sep=0.75pt]  [font=\footnotesize] [align=left] {$\displaystyle \boldsymbol{X}^{4}$};
\draw (118,150) node [anchor=north west][inner sep=0.75pt]  [font=\footnotesize] [align=left] {$\displaystyle \boldsymbol{X}^{7}$};
\draw (159,10) node [anchor=north west][inner sep=0.75pt]  [font=\small] [align=left] {Channel-use index (t)};
\draw (366,199) node [anchor=north west][inner sep=0.75pt]   [align=left] {{\small (Output is the simple XOR of the active inputs)}};
\draw (118,236.4) node [anchor=north west][inner sep=0.75pt]    {$\mathbf{Z\ \ =\ \ 1  \ \ \ 0\ \ \ 0\ \ \ 1\ \ \ 1 \ \ \ \ 0\ \ \ \ 1} \ $};
\draw (-8,227) node [anchor=north west][inner sep=0.75pt]   [align=left] { {\small Fading +}\\[-1pt]{\small Envelope Detection }};
\draw (117,199.4) node [anchor=north west][inner sep=0.75pt]    {$\mathbf{Z\ \ =\ \ 1\ \ \ 1\ \ \ 0\ \ \ 1\ \ \ 0\ \ \ \ 0\ \ \ \ 1} \ $};
\draw (-10,201) node [anchor=north west][inner sep=0.75pt]   [align=left] { \ \ \ \ \  {\small Ideal channel  \  :} };
\draw (106,238) node [anchor=north west][inner sep=0.75pt]   [align=left] {:};
\draw (366,235) node [anchor=north west][inner sep=0.75pt]   [align=left] {{\small (Potential bit-flips (circled) due to channel noise)}};
\draw (366,35) node [anchor=north west][inner sep=0.75pt]   [align=left] {{ User set: $\mathcal{D}=\{1,2,\ldots,7\}$.}};

\draw (366,65) node [anchor=north west][inner sep=0.75pt]   [align=left] {{ Active user set: $\mathcal{A}=\{3,6\}$.}};

\draw (366,95) node [anchor=north west][inner sep=0.75pt]   [align=left] {{ $\textbf{X}^{i} \triangleq (\hspace{0.05cm}X_1^i,X_2^i,\ldots X_{n}^i) \in \{0,\sqrt{P}\}^{n} $: Preamble of user $i$.}};
\draw (366,115) node [anchor=north west][inner sep=0.75pt]   [align=left] {\ {(\small Table entries are normalized by $\sqrt{P}$. )} };
\draw (366,148) node [anchor=north west][inner sep=0.75pt]   [align=left] {{ $\textbf{Z} \triangleq (Z_1,\ldots, Z_n)$: Binary   energy detector outputs. }};
\end{tikzpicture}
}
	\setlength\abovecaptionskip{-2pt}
	\setlength{\belowcaptionskip}{-15pt}
	\caption{GT equivalence of Non-coherent $(7,2)$-MnAC. Users 3 and 6 are assumed to be active. }
	\label{fig:qtzn2}
	\end{figure}
Taking into account sparse activity, we consider the asymptotic regime where $\ell$ can be unbounded  while the number of active users scale as $k=\Theta(\ell^{\alpha});$  $0 \leq \alpha <1$. This is  consistent with the notion of  Many Access channel (MnAC) model introduced by Chen \textit{et al.} \cite{7852531}  with the difference that instead of our setting with $k$ active users, \cite{7852531}  considers the setting where users are active  probabilistically. In this paper, we focus on the $k= \Theta(1)$ regime. We use $(\ell,k)$-MnAC  to refer to the channel model used in this paper.

For purposes of activity detection, each user $i \in \mathcal{D}$ is assigned  an i.i.d  binary preamble  $\textbf{X}^{i} \triangleq (\hspace{0.05cm}X_1^i,X_2^i,\ldots X_{n}^i) \in \{0,\sqrt{P}\}^{n} $ of length $n$ generated via a  Bernoulli process with probability  $q$. We use $\mathcal{P}$ to denote the entire set of preambles. During activity detection,  the active users transmit their OOK-modulated preamble in a time-synchronized manner over $n$ channel-uses with  power $P$ during the `On' symbols. The received symbol $S_t$ during the $t^{th}$ channel-use is 
\begin{equation}
    S_t=(\boldsymbol{\beta} \circ \mathbf{h}_t) \cdot \mathbf{X}_{t}+w_t, \forall t \in\{1,2,\ldots,n\}.
\end{equation}Here, $\textbf{X}_t=(X^1_t,\ldots, X^\ell_t)$ represents the $t^{th}$ preamble entries of the $\ell$ users and $\textbf{h}_t=(h^1_t,\ldots,h^\ell_{t})$ is the vector of channel coefficients of the $\ell$ users during the $t^{th}$ channel-use  where $h_t^i$'s are   i.i.d   The AWGN noise vector $\textbf{w}=(w_1,\ldots, w_n)$ is composed of i.i.d entries,  $w_t \sim \mathcal{C} \mathcal{N}\left(0, \sigma_w^{2}\right)$ and is independent of $\textbf{h}_t$ $\forall t\in \{1,2,\ldots,n\}$.  The vector $\boldsymbol{\beta}=(\beta_{1},\ldots,\beta_{\ell})$ represents the activity status such  that $\beta_{i}=1$, if $i^{th}$ user is active; 0 otherwise.  $\boldsymbol{\beta} \circ \mathbf{h}_t$ denotes the element-wise product between the vectors $\boldsymbol{\beta}$ and $\mathbf{h}_t$. We define the signal-to-noise ratio as $\text{SNR}:=\frac{P\sigma^2}{\sigma_w^2}.$

We assume that there is no CSI available at the BS since we are operating in a fast fading scenario such as fast frequency hopping networks \cite{6477839}. This means $\textbf{h}_t, \forall t\in \{1,2,\ldots,n\}$ is unknown \textit{a-priori} at BS. Hence, the BS employs non-coherent detection as shown in Fig. 1. During the $t^{th}$ channel-use, the envelope detector processes  $S_t$ to obtain $Y_{t}=|S_t|$, the envelope of $S_t$.  Thresholding  after envelope detection produces a binary output $Z_{t}$ such that
\begin{equation}
  Z_{t}=1 \text { if } |S_t|^2>\gamma ; \text { else } Z_{t}=0.  
  \label{threshold}
\end{equation} Here, $\gamma $ represents a pre-determined threshold parameter.

\subsection{GT model for active device identification}

Group testing \cite{6157065} considers a sparse inference problem where the objective is to identify a sparse set of ``defective" items from a large set of items by performing tests on groups of items. In the error-free scenario, each group test produces a binary output where a `1' corresponds to the inclusion of at least one ``defective" in the group being tested whereas a `0' indicates that the tested group of items excludes all defectives. The tests need to be devised such that the defective set of items can be recovered using the binary vector of test outcomes with a minimum number of tests. GT models  can be classified as  adaptive and non-adaptive models.  In adaptive GT, the previous test results can be used to design the future tests. In non-adaptive setting, all group tests are designed independent of each other.

To draw the equivalence between active device identification and GT, we can consider the devices as items and the active devices as defectives. Each channel-use $t \in \{1,2,\ldots, n\}$ corresponds to a group test in which a randomly chosen subset of devices engage in joint transmission of `On' symbols, if active. The random selection of devices during the $n$ testing instances is based on the Bernoulli based independent binary preambles  $\textbf{X}^{i}  \in \{0,\sqrt{P}\}^{n}, \forall i \in \mathcal{D}$. The BS measures received energy during each channel-use and produces a binary 0-1 output based on whether the received energy exceeds a predetermined threshold. These 1-bit energy measurements at the receiver side corresponds to the noisy group test results. This is illustrated in Fig. \ref{fig:qtzn} and Fig. \ref{fig:qtzn2}. Our framework corresponds to non-adaptive GT since the testing/grouping pattern is based on predefined binary preambles and is not updated based on the previous tests.

\subsection{Problem Formulation}
Given the preamble set $\mathcal{P}$  and the vector $\textbf{Z}=(Z_1,\ldots, Z_n)$  of channel outputs, BS aims to identify the  active user set $\mathcal{A}$, alternatively the vector $\boldsymbol{\beta}$, during the activity detection phase. We use the following definitions:

\begin{definition}
\textbf{ Activity detection}: For a given set of preambles $\{\textbf{X}^i: i\in \mathcal{D}\}$, an activity detection function  $\hat{\boldsymbol{\beta}}:\{0,1\}^{n} \rightarrow \{0,1\}^{\ell}$  is a deterministic rule that maps the binary valued channel outputs $\textbf{Z}$ to an estimate $\hat{\boldsymbol{\beta}}$ of the  activity status vector such that the Hamming weight of $\hat{\boldsymbol{\beta}}$ is $k$.
\end{definition}

\begin{definition}
\textbf{Probability of erroneous identification}: For an activity detection function  $\hat{\boldsymbol{\beta}}$, probability of erroneous identification  $\mathbb{P}_e^{(\ell)}$ is defined as \begin{equation}
     \mathbb{P}_e^{(\ell)}:=\frac{1}{{\ell \choose k}} \sum_{\boldsymbol{\beta}:\sum_{i=1}^{\ell}{\beta_i}=k} \mathbb{P}(\hat{\boldsymbol{\beta}} \neq \boldsymbol{\beta}).
 \end{equation} 
\end{definition}

\begin{definition}
\textbf{Minimum user identification cost} \cite{7852531}: The minimum user identification
cost is said to be $n(\ell)$ if for every $0 < \epsilon <1$, there exists a preamble of length $n_0 = (1 + \epsilon)n(\ell)$ such that $\mathbb{P}_e^{(\ell)} \rightarrow 0$ as $\ell \rightarrow \infty$, whereas
$\mathbb{P}_e^{(\ell)}$ is strictly bounded away from 0 if
$n_0 = (1 - \epsilon)n(\ell)$.
\end{definition}

Note that  the minimum user identification cost  $n(\ell)$ is defined based on its asymptotic behaviour as $\ell \rightarrow \infty$. Our goal is to characterize $n(\ell)$  of the non-coherent $(\ell,k)-$MnAC  in Fig. 1 which would be indicative of the time-efficiency of BS in identifying the active users. 
 
 The key distinction to the work of Chen \textit{et al}. in \cite{7852531} is that we use OOK signaling at the user side and non-coherent detection at the BS which suits secure mMTC networks where obtaining reliable CSI is infeasible due to the fast fading nature \cite{s18041291}. Furthermore, in contrast to our prior work \cite{robin2021capacity} which evaluates the performance of N-COMP based detection in the non-coherent setting, our focus here is to derive fundamental limits on user identification cost. 

	\section{minimum user identification cost}

 In this section, we derive the minimum user identification cost for the non-coherent $(\ell,k)-$ MnAC in the $k=\Theta(1)$ regime as presented in Theorem \ref{thm3}. First, we provide an equivalent characterization of the active device identification problem in  the non-coherent $(\ell,k)-$ MnAC by viewing it as a  decoding problem in a point-to-point  vector input - scalar  output  channel whose inputs correspond to the active users as shown in Fig. \ref{fig:redalpha}. Thereafter, we derive the maximum rate of the equivalent channel by exploiting its cascade structure. Then, we use  GT theory \cite{6157065} to relate the maximum rate with the minimum user identification cost of non-coherent $(\ell,k)-$MnAC.  Here onwards, we drop the subscript $t$ denoting channel-use index. All logarithms are in base 2.
\vspace{0.05cm}
 \subsection{Equivalent  Channel Model} 
 \vspace{0.05cm}
The aim of active device identification  in non-coherent $(\ell,k)-$ MnAC is to identify the set of active devices based on the binary detector output vector $\textbf{Z}$ and the preamble set $\mathcal{P}$.  This can be alternatively viewed as a decoding problem in a traditional point to point communication channel as follows:  The source message is the activity status vector $\boldsymbol{\beta}$ denoting the set of active users $\mathcal{A}$ and the codebook is the entire set of preambles $\mathcal{P}$. Given $\boldsymbol{\beta}$ for an active set $\mathcal{A}=\{a_1,\ldots a_k\}$, the codeword becomes the corresponding subset of preambles $\{\textbf{X}^{i}: i \in \mathcal{A}\}$. Note that since the preambles for each user is independently designed using an i.i.d $Bern(q)$ random variable, our equivalent channel model has the additional constraint that the only tunable parameter available for ``codebook design" is the sampling probability $q$. Hence, identifying active devices reduces to the problem of decoding messages in this equivalent constrained channel. Clearly, the minimum user identification cost for the non-coherent $(\ell,k)-$MnAC is at least the number of channel-uses required to communicate the set of active users; i.e., $\log{\ell\choose k}$ bits of information over this equivalent constrained  point-to-point channel. We formally setup this equivalent channel model below.

Considering the channel in Fig. 1, since the set of active users is $\mathcal{A}=\{a_1,\ldots a_k\}$, the input to the non-coherent $(\ell,k)-$MnAC  is $\textbf{X}=(X^{a_1},\ldots X^{a_k})$.   Thus,  the signal at the input of envelope detector is 
    $ S=\sqrt{P}\sum_{m=1}^{k}h^{a_m} +w.$
 Let $V=\frac{\sum_{i=1}^{k}X^{a_i}}{\sqrt{P}}$ denote the Hamming weight of $\textbf{X}$ which is the number of active users transmitting `On' signal during the $t^{th}$  channel-use.
Thus, conditioned on $V=v$, $U:=|S|^{2}$ follows an exponential  distribution given by

\begin{equation}
     f_{U|V}(u|v)=\frac{1}{v \sigma^{2}P+\sigma_{w}^{2}} e^{-\frac{u}{v \sigma^{2}P+\sigma_{w}^{2}}}, u \geq 0.
     \label{expdis}
 \end{equation} 
 As evident from (\ref{threshold}) and (\ref{expdis}), we have $\textbf{X} \rightarrow V\rightarrow Z$, i.e.,   the transition probability $p\left(z\mid \textbf{x},v\right)$  is only dependent on the  channel input  $\textbf{X}=(X^{a_1},X^{a_2},\ldots X^{a_k})$  through its Hamming weight  $V$.  Also,  since $  f_{U|V}$ is an exponential distribution, $p_v:=p\left(Z=0\mid V=v\right) = p\left(U <\gamma\mid V=v\right)$ can be expressed as
\begin{equation}
p_v=1-e^{-\frac{\gamma}{v \sigma^{2}P+\sigma_{w}^{2}}}. \label{channeleq}
\end{equation} Similarly, $\operatorname{Pr}\left(Z=1\mid V=v\right)=1-p_v.$ Note that $p_v$ is a strictly decreasing function of $v$ where $v \in\{0,1,..,k\}$ assuming  w.l.o.g. positive values for $\sigma^2, \sigma_w^2$ and $\gamma$.

Thus, the non-coherent $(\ell,k)-$MnAC can be equivalently viewed as a traditional point-to-point communication channel whose inputs are the active users as in Fig. \ref{fig:redalpha}. Moreover, this equivalent channel can be modeled as a cascade of two channels; the first channel computes the Hamming weight $V$ of the input $\textbf{X}$ whereas the second channel translates the Hamming weight $V$ into a binary output $Z$ depending on the fading statistics $(\sigma^2)$, noise variance $(\sigma_w^2)$ of the wireless channel and the non-coherent detector threshold $\gamma$. We exploit this cascade channel structure to establish the minimum user identification cost $n(\ell)$ for the non-coherent $(\ell,k)-$MnAC.

\begin{figure}   
	\centering
	\resizebox{3.3in}{!}{

\tikzset{every picture/.style={line width=0.85pt}} 

\begin{tikzpicture}[x=0.85pt,y=0.85pt,yscale=-.85,xscale=.96]

\draw   (59.92,86.98) -- (92.32,86.98) -- (92.32,119.45) -- (59.92,119.45) -- cycle ;
\draw   (59.92,131.79) -- (92.32,131.79) -- (92.32,167.63) -- (59.92,167.63) -- cycle ;
\draw   (59.92,221.39) -- (92.32,221.39) -- (92.32,257.23) -- (59.92,257.23) -- cycle ;
\draw  [dash pattern={on 1.84pt off 2.51pt}]  (76.25,175.25) -- (76.77,216.83) ;
\draw    (92.82,145.23) -- (145.76,145.95) ;
\draw [shift={(147.76,145.97)}, rotate = 180.78] [color={rgb, 255:red, 0; green, 0; blue, 0 }  ][line width=0.75]    (10.93,-3.29) .. controls (6.95,-1.4) and (3.31,-0.3) .. (0,0) .. controls (3.31,0.3) and (6.95,1.4) .. (10.93,3.29)   ;
\draw    (91.83,239.31) -- (144.78,240.03) ;
\draw [shift={(146.78,240.06)}, rotate = 180.78] [color={rgb, 255:red, 0; green, 0; blue, 0 }  ][line width=0.75]    (10.93,-3.29) .. controls (6.95,-1.4) and (3.31,-0.3) .. (0,0) .. controls (3.31,0.3) and (6.95,1.4) .. (10.93,3.29)   ;
\draw    (92.82,107.59) -- (145.76,108.31) ;
\draw [shift={(147.76,108.34)}, rotate = 180.78] [color={rgb, 255:red, 0; green, 0; blue, 0 }  ][line width=0.75]    (10.93,-3.29) .. controls (6.95,-1.4) and (3.31,-0.3) .. (0,0) .. controls (3.31,0.3) and (6.95,1.4) .. (10.93,3.29)   ;
\draw   (155.99,80.56) -- (262.26,80.56) -- (262.26,259.77) -- (155.99,259.77) -- cycle ;
\draw    (263.57,96.84) -- (285.26,96.7) ;
\draw [shift={(287.26,96.69)}, rotate = 179.64] [color={rgb, 255:red, 0; green, 0; blue, 0 }  ][line width=0.75]    (10.93,-3.29) .. controls (6.95,-1.4) and (3.31,-0.3) .. (0,0) .. controls (3.31,0.3) and (6.95,1.4) .. (10.93,3.29)   ;
\draw    (263.57,116.55) -- (285.26,116.42) ;
\draw [shift={(287.26,116.4)}, rotate = 179.64] [color={rgb, 255:red, 0; green, 0; blue, 0 }  ][line width=0.75]    (10.93,-3.29) .. controls (6.95,-1.4) and (3.31,-0.3) .. (0,0) .. controls (3.31,0.3) and (6.95,1.4) .. (10.93,3.29)   ;
\draw    (263.57,250.06) -- (285.26,249.92) ;
\draw [shift={(287.26,249.91)}, rotate = 179.64] [color={rgb, 255:red, 0; green, 0; blue, 0 }  ][line width=0.75]    (10.93,-3.29) .. controls (6.95,-1.4) and (3.31,-0.3) .. (0,0) .. controls (3.31,0.3) and (6.95,1.4) .. (10.93,3.29)   ;
\draw  [dash pattern={on 0.84pt off 2.51pt}]  (310.95,137.91) -- (310.95,147.77) -- (310.95,158.52) ;
\draw  [dash pattern={on 0.84pt off 2.51pt}]  (280.95,50.91) -- (280.95,248.52) ;
\draw   (468.87,104.91) .. controls (468.87,92.53) and (479.92,82.5) .. (493.55,82.5) .. controls (507.18,82.5) and (518.22,92.53) .. (518.22,104.91) .. controls (518.22,117.28) and (507.18,127.31) .. (493.55,127.31) .. controls (479.92,127.31) and (468.87,117.28) .. (468.87,104.91) -- cycle ;
\draw   (470.85,235.73) .. controls (470.85,223.35) and (481.89,213.32) .. (495.52,213.32) .. controls (509.15,213.32) and (520.2,223.35) .. (520.2,235.73) .. controls (520.2,248.1) and (509.15,258.13) .. (495.52,258.13) .. controls (481.89,258.13) and (470.85,248.1) .. (470.85,235.73) -- cycle ;
\draw    (321.81,91.32) -- (473.41,249.36) ;
\draw [shift={(474.79,250.81)}, rotate = 226.19] [color={rgb, 255:red, 0; green, 0; blue, 0 }  ][line width=0.75]    (10.93,-3.29) .. controls (6.95,-1.4) and (3.31,-0.3) .. (0,0) .. controls (3.31,0.3) and (6.95,1.4) .. (10.93,3.29)   ;
\draw    (322.79,118.2) -- (473.29,249.49) ;
\draw [shift={(474.79,250.81)}, rotate = 221.1] [color={rgb, 255:red, 0; green, 0; blue, 0 }  ][line width=0.75]    (10.93,-3.29) .. controls (6.95,-1.4) and (3.31,-0.3) .. (0,0) .. controls (3.31,0.3) and (6.95,1.4) .. (10.93,3.29)   ;
\draw    (324.77,250.81) -- (472.79,250.81) ;
\draw [shift={(474.79,250.81)}, rotate = 180] [color={rgb, 255:red, 0; green, 0; blue, 0 }  ][line width=0.75]    (10.93,-3.29) .. controls (6.95,-1.4) and (3.31,-0.3) .. (0,0) .. controls (3.31,0.3) and (6.95,1.4) .. (10.93,3.29)   ;
\draw    (322.79,118.2) -- (472.82,91.66) ;
\draw [shift={(474.79,91.32)}, rotate = 169.97] [color={rgb, 255:red, 0; green, 0; blue, 0 }  ][line width=0.75]    (10.93,-3.29) .. controls (6.95,-1.4) and (3.31,-0.3) .. (0,0) .. controls (3.31,0.3) and (6.95,1.4) .. (10.93,3.29)   ;
\draw    (321.81,91.32) -- (472.79,91.32) ;
\draw [shift={(474.79,91.32)}, rotate = 180] [color={rgb, 255:red, 0; green, 0; blue, 0 }  ][line width=0.75]    (10.93,-3.29) .. controls (6.95,-1.4) and (3.31,-0.3) .. (0,0) .. controls (3.31,0.3) and (6.95,1.4) .. (10.93,3.29)   ;
\draw    (324.77,250.81) -- (473.42,92.77) ;
\draw [shift={(474.79,91.32)}, rotate = 133.25] [color={rgb, 255:red, 0; green, 0; blue, 0 }  ][line width=0.75]    (10.93,-3.29) .. controls (6.95,-1.4) and (3.31,-0.3) .. (0,0) .. controls (3.31,0.3) and (6.95,1.4) .. (10.93,3.29)   ;
\draw   (301.08,80.56) -- (319.83,80.56) -- (319.83,259.77) -- (301.08,259.77) -- cycle ;
\draw  [dash pattern={on 0.84pt off 2.51pt}]  (310.95,200.63) -- (310.95,210.49) -- (309.96,233.78) ;
\draw    (322.79,171.96) -- (473.03,92.25) ;
\draw [shift={(474.79,91.32)}, rotate = 152.05] [color={rgb, 255:red, 0; green, 0; blue, 0 }  ][line width=0.75]    (10.93,-3.29) .. controls (6.95,-1.4) and (3.31,-0.3) .. (0,0) .. controls (3.31,0.3) and (6.95,1.4) .. (10.93,3.29)   ;
\draw    (322.79,171.96) -- (473.02,249.89) ;
\draw [shift={(474.79,250.81)}, rotate = 207.42] [color={rgb, 255:red, 0; green, 0; blue, 0 }  ][line width=0.75]    (10.93,-3.29) .. controls (6.95,-1.4) and (3.31,-0.3) .. (0,0) .. controls (3.31,0.3) and (6.95,1.4) .. (10.93,3.29)   ;
\draw  [fill={rgb, 255:red, 155; green, 155; blue, 155 }  ,fill opacity=0.23 ] (119.47,103.11) .. controls (119.47,79.36) and (138.72,60.1) .. (162.48,60.1) -- (402.17,60.1) .. controls (425.93,60.1) and (445.18,79.36) .. (445.18,103.11) -- (445.18,232.14) .. controls (445.18,255.89) and (425.93,275.15) .. (402.17,275.15) -- (162.48,275.15) .. controls (138.72,275.15) and (119.47,255.89) .. (119.47,232.14) -- cycle ;

\draw (165,94.4) node [anchor=north west][inner sep=0.75pt]  [font=\huge]  {$\frac{\sum _{i=1}^{k}X_{t}^{a_{i}}}{\sqrt{P}}$};
\draw (156,173) node [anchor=north west][inner sep=0.75pt]  [font=\large] [align=left] {{\large \ \ \ Hamming  }\\{\large \ \ \ weight $(V)$}\\{\large \  computation}};
\draw (28,32) node [anchor=north west][inner sep=0.75pt]  [font=\small] [align=left] {{\large Channel input}\\{\large  $\displaystyle \ \ \ \ \ \ \ \ \textbf{X}$}};
\draw (220.53,30.96) node [anchor=north west][inner sep=0.75pt]  [font=\large]  {{Hamming weight}{$\hspace{0.1cm}V$}};
\draw (487.53,95.96) node [anchor=north west][inner sep=0.75pt]  [font=\LARGE]  {$0$};
\draw (488.53,225.39) node [anchor=north west][inner sep=0.75pt]  [font=\LARGE]  {$1$};
\draw (421,32) node [anchor=north west][inner sep=0.75pt]  [font=\large] [align=left] {{\large Channel Output}\\{\large $\displaystyle \ \ \ \ \ \ \ \ \ \ Z$}};
\draw (48.2,276.73) node [anchor=north west][inner sep=0.75pt]  [font=\large]  {$active\ user\ set:\ \{a_{1} ,a_{2} ,\dotsc ,a_{k} \}$};
\draw (306.92,168.09) node [anchor=north west][inner sep=0.75pt]    {$i$};
\draw (328.51,150.22) node [anchor=north west][inner sep=0.75pt]  [font=\large,rotate=-332.77]  {$p_{i}$};
\draw (329.93,176.77) node [anchor=north west][inner sep=0.75pt]  [font=\large,rotate=-29.2]  {$1-p_{i}$};
\draw (49.02,302.55) node [anchor=north west][inner sep=0.75pt]  [font=\large]  {$X_{t}^{a_{i}} \in \{0,\sqrt{P}\}$};
\draw (348.88,282.21) node [anchor=north west][inner sep=0.75pt]  [font=\Large]  {$p_{i} =1-e^{-\frac{\gamma}{i\sigma ^{2}P +\sigma _{w}^{2}}}$};
\draw (61.92,90.38) node [anchor=north west][inner sep=0.75pt]  [font=\large]  {$X_{t}^{a_{1}}$};
\draw (60.03,226.07) node [anchor=north west][inner sep=0.75pt]  [font=\large]  {$X_{t}^{a_{k}}$};
\draw (60.74,135.53) node [anchor=north west][inner sep=0.75pt]  [font=\large]  {$X_{t}^{a_{2}}$};
\draw (306.92,240.67) node [anchor=north west][inner sep=0.75pt]    {$k$};
\draw (305.94,110.75) node [anchor=north west][inner sep=0.75pt]    {$1$};
\draw (305.94,88.35) node [anchor=north west][inner sep=0.75pt]    {$0$};
\end{tikzpicture}
}

	\setlength{\belowcaptionskip}{-19pt}
	\caption{Equivalent channel with only  active users of the $(\ell,k)-$MnAC as inputs.}
	\label{fig:redalpha}
	\end{figure}
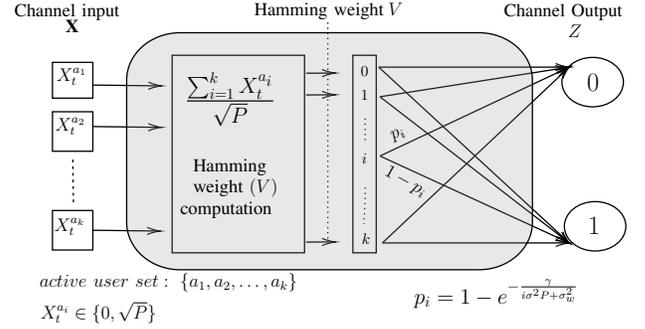

\subsection{ Maximum rate of the equivalent channel}
   
\begin{lemma}
For a  given fading statistics $\sigma^2$ and noise variance $\sigma_w^2$ for the $(\ell,k)$-MnAC, the maximum rate of the equivalent point-to-point channel is \begin{equation} 
    C=\max_{(\gamma,q)}h\Big(E\Big[e^{-\frac{\gamma}{V  \sigma^{2}P+\sigma_{w}^{2}}}\Big]\Big)-E\Big[h\Big(e^{-\frac{\gamma}{V  \sigma^{2}P+\sigma_{w}^{2}}}\Big)\Big]  \label{jengap}
\end{equation} 
 where $E(\cdot)$ denotes expectation w.r.t  $ V \sim Bin(k, q)$ and $h(x)=-x \log x-(1-x) \log (1-x)$ is the binary entropy. Here, the maximization is done jointly over the sampling probability $q$ for preamble generation and  the energy detector threshold $\gamma$.
 \label{thm2th}
 \end{lemma}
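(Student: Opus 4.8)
The plan is to identify the maximum rate of the equivalent point-to-point channel with the single-letter mutual information $I(\mathbf{X};Z)$ evaluated under the only admissible input law---the i.i.d.\ $Bern(q)$ product distribution that the preamble construction imposes---and then to exploit the cascade structure to collapse this vector-input quantity to a mutual information between the intermediate Hamming weight $V$ and the binary output $Z$. First I would observe that, as already established in the excerpt, $Z$ depends on $\mathbf{X}$ only through $V$, so that $\mathbf{X}\to V\to Z$ is a Markov chain; since $V$ is moreover a deterministic function of $\mathbf{X}$, adjoining $V$ does not change the mutual information and the conditional term vanishes:
\begin{equation*}
 I(\mathbf{X};Z)=I(\mathbf{X},V;Z)=I(V;Z)+I(\mathbf{X};Z\mid V)=I(V;Z),
\end{equation*}
where $I(\mathbf{X};Z\mid V)=0$ by the Markov property. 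This reduces the problem to the scalar channel $V\to Z$ driven by $V\sim Bin(k,q)$.

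Next I would expand $I(V;Z)=H(Z)-H(Z\mid V)$ and substitute the channel law \eqref{channeleq}. Since $Z$ is binary with $\Pr(Z=1\mid V=v)=e^{-\gamma/(v\sigma^2P+\sigma_w^2)}$, averaging over $V$ gives $\Pr(Z=1)=E\big[e^{-\gamma/(V\sigma^2P+\sigma_w^2)}\big]$, so that $H(Z)=h\big(E[e^{-\gamma/(V\sigma^2P+\sigma_w^2)}]\big)$, using the symmetry $h(x)=h(1-x)$ of the binary entropy. For the conditional term, $H(Z\mid V=v)=h(p_v)=h\big(e^{-\gamma/(v\sigma^2P+\sigma_w^2)}\big)$, and averaging over $V$ yields $H(Z\mid V)=E\big[h(e^{-\gamma/(V\sigma^2P+\sigma_w^2)})\big]$. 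Subtracting the two terms produces exactly the Jensen-type gap displayed in \eqref{jengap}.

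Finally, because the preambles are generated i.i.d.\ across channel-uses, the induced $V\to Z$ link is a memoryless channel, and the channel coding theorem certifies that $I(V;Z)$ is the achievable rate for each fixed pair $(q,\gamma)$; maximizing jointly over the sampling probability $q$ (the only free parameter of the input law) and the detector threshold $\gamma$ (the only free parameter of the second stage) gives the claimed $C$. The one step demanding genuine care is the first: one must verify both that the Markov property forces $I(\mathbf{X};Z\mid V)=0$ and that $V$ being a function of $\mathbf{X}$ gives $I(\mathbf{X};Z)=I(\mathbf{X},V;Z)$, so that the passage to $I(V;Z)$ is an exact identity rather than a one-sided data-processing bound. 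Once this reduction is secured, the remaining entropy bookkeeping is routine, the symmetry $h(x)=h(1-x)$ being the only non-obvious ingredient.
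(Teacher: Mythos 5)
Your proposal is correct and follows essentially the same route as the paper: reduce $I(\mathbf{X};Z)$ to $I(V;Z)$ via the Markov chain $\mathbf{X}\to V\to Z$ together with $V$ being a deterministic function of $\mathbf{X}$, then compute $H(Z)=h\big(E[e^{-\gamma/(V\sigma^2P+\sigma_w^2)}]\big)$ and $H(Z\mid V)=E\big[h(e^{-\gamma/(V\sigma^2P+\sigma_w^2)})\big]$ with $V\sim Bin(k,q)$, and maximize over $(\gamma,q)$. Your explicit justification of $I(\mathbf{X};Z)=I(V;Z)$ via the chain rule is a slightly more careful rendering of the step the paper states without elaboration, but it is the same argument.
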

 \vspace{-.9cm}
\begin{proof} \textit{(Sketch)} Our equivalent point-to-point communication channel described above has two tunable parameters, viz, the sampling probability $q$ at the user side and the non-coherent detector threshold $\gamma$ at the BS. Thus, the maximum rate of this equivalent channel between the binary vector input $\textbf{X}$ and binary scalar output $Z$ is  $C=\max_{(\gamma,q)} I(\textbf{X};Z)$.
   Since the preamble entries $X^{a_i}  \sim Bern(q)$  such that  $X^{a_i}\ind X^{a_j}  $ $\forall i,j \in  \{1,\ldots k\}; i\neq j$, the Hamming weight $V$ follows Binomial distribution denoted by $V \sim  Bin(k,q).$ $V$ being a deterministic function of $\textbf{X}$ and $\textbf{X} \rightarrow V\rightarrow Z$, we have
    \begin{equation}
    \begin{aligned}
        C &=\max_{(\gamma,q)} I(V;Z)\\ &=\max_{(\gamma,q)} H(Z)-H(Z \mid V) \text{ where } V \sim Bin(k, q). 
    \end{aligned}
\label{eq20}
\end{equation} Using standard algebraic manipulations, we can write \small{$H(Z)=h\Big(E\Big[e^{-\frac{\gamma}{V  \sigma^{2}P+\sigma_{w}^{2}}}\Big]\Big)$}  and \small{$H(Z \mid V) = E\left[h\Big(e^{-\frac{\gamma}{V  \sigma^{2}P+\sigma_{w}^{2}}}\Big)\right]$}.
Thus, jointly optimizing over $\gamma$ and $q$, the maximum rate is given by 
\begin{equation*}
    C=\max_{(\gamma,q)}h\Big(E\Big[e^{-\frac{\gamma}{V  \sigma^{2}P+\sigma_{w}^{2}}}\Big]\Big)-E\Big[h\Big(e^{-\frac{\gamma}{V  \sigma^{2}P+\sigma_{w}^{2}}}\Big)\Big] ,
\end{equation*}  completing the proof. \end{proof}	
Note that the minimum user identification  cost $n(\ell)$ for the non-coherent $(\ell,k)-$MnAC,  must obey $ C \geq \frac{\log {\ell\choose k}}{n(\ell)}$. However,  we cannot directly  claim that  $n(\ell)$  exactly matches the number of channel-uses required to send $\log{\ell\choose k}$ bits of information over the equivalent channel at maximum rate $C$. The reason is that the effective  codeword corresponding to a user set $\mathcal{A}$ is the subset of preambles $\{\textbf{X}^{i}: i \in \mathcal{A}\}$ indexed by the set $\mathcal{A}$.  Thus, the effective codewords of different user sets are potentially dependent due to  overlapping preambles. This is in contrast to the independent codeword assumptions typical to channel capacity analysis thereby rendering the traditional random coding approach ineffective.  We close this gap by making use of the equivalence of active device identification and GT to propose a decoding scheme achieving the minimum user identification cost as presented below.
\subsection{Minimum user identification cost}
\begin{theorem}
The minimum  user identification cost $n(\ell)$ of the non-coherent  $(\ell,k)-MnAC$ with $k=\Theta(1)$  is given by
\begin{equation*} \small
    n(\ell)= \frac{k\log(\ell)}{\max_{(\gamma,q)}h\Big(E\Big[e^{-\frac{\gamma}{V  \sigma^{2}P+\sigma_{w}^{2}}}\Big]\Big)-E\Big[h\Big(e^{-\frac{\gamma}{V  \sigma^{2}P+\sigma_{w}^{2}}}\Big)\Big]}
\end{equation*} \normalsize where $E(.)$ denotes the expectation w.r.t  $V \sim \operatorname{Bin}(k, q)$, $q$ is the sampling probability for i.i.d preamble generation and $\gamma$ denotes the threshold for non-coherent energy detection.
\label{thm3}
\end{theorem}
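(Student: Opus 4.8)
The plan is to prove a matching converse and achievability and then collapse both to $k\log(\ell)/C$ by invoking $\log\binom{\ell}{k}=k\log\ell-\log k!=(1-o(1))\,k\log\ell$, which holds because $k=\Theta(1)$; here $C$ is the maximum rate of Lemma~\ref{thm2th}. The converse direction is essentially already in hand: the excerpt records $C\ge\log\binom{\ell}{k}/n(\ell)$, obtained from Fano's inequality together with the data-processing chain $\boldsymbol{\beta}\to\mathbf{X}\to V\to\mathbf{Z}$ and the per-test bound $I(\mathbf{X}_t;Z_t)=I(V_t;Z_t)\le C$. Rearranging this and substituting $\log\binom{\ell}{k}=(1-o(1))\,k\log\ell$ yields $n(\ell)\ge(1-\epsilon)\,k\log(\ell)/C$, so no preamble shorter than $(1-\epsilon)\,k\log(\ell)/C$ can force $\mathbb{P}_e^{(\ell)}\to0$.

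For achievability I would first fix the maximizing pair $(\gamma^\ast,q^\ast)$ from Lemma~\ref{thm2th}, draw all $\ell$ preambles i.i.d.\ $\mathrm{Bern}(q^\ast)$, set the block length to $n=(1+\epsilon)\,k\log(\ell)/C$, and decode with the maximum-likelihood (information-theoretic) rule that returns the weight-$k$ vector $\hat{\boldsymbol\beta}$ best explaining $\mathbf{Z}$ under the known per-test law $p_v=1-e^{-\gamma^\ast/(v\sigma^2P+\sigma_w^2)}$. To bound $\mathbb{P}_e^{(\ell)}$, the key idea is to \emph{not} treat the $\binom{\ell}{k}$ effective codewords as independent (they overlap, which is exactly the obstruction flagged before the theorem), but instead to partition the competing active sets by the number $d\in\{1,\dots,k\}$ of mismatched users relative to the true set.

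The error analysis then proceeds by a union bound over mismatch levels. There are $\binom{k}{d}\binom{\ell-k}{d}\le \ell^{d}$ sets at level $d$, and the pairwise confusion probability between the true set and such a set decays like $2^{-n\bar I_d}$, where $\bar I_d=I(\mathbf{X}_{\mathrm{diff}};Z\mid \mathbf{X}_{\mathrm{common}})$ is the information the $d$ differing columns carry about the output given the $k-d$ shared active columns. The union bound makes the error vanish provided $n\,\bar I_d>(1+o(1))\,d\log\ell$ for every $d$. For the fully disjoint event $d=k$ the conditioning set is empty, so $\bar I_k=I(\mathbf{X};Z)=I(V;Z)=C$ evaluated at $(\gamma^\ast,q^\ast)$, and this term is controlled exactly by the chosen $n$, giving the target $k\log(\ell)/C$.

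The hard part will be showing that the single choice $n=(1+\epsilon)\,k\log(\ell)/C$ simultaneously controls \emph{all} mismatch levels, i.e.\ that the disjoint event $d=k$ is the binding one. This reduces to the information inequality $\bar I_d/d\ge C/k$ for every $d<k$, equivalently $\max_{d}\, d/\bar I_d=k/C$, which must be verified for the specific saturating exponential channel $p_v=1-e^{-\gamma/(v\sigma^2P+\sigma_w^2)}$. The delicate point is that conditioning on $k-d$ common active users can ``mask'' the differing users (once a pool is already positive, the extra `On' symbols add little), so one has to average carefully over whether the shared preambles are `On' and check that $\bar I_d/d$ is monotonically non-increasing in $d$. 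Establishing this monotonicity, together with $\log\binom{\ell}{k}\sim k\log\ell$, is what closes the gap between the converse and achievability and yields $n(\ell)=k\log(\ell)/C$.
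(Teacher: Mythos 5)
Your overall architecture matches the paper's: a converse via Fano/capacity arguments plus Lemma~\ref{thm2th}, and an achievability analysis over an i.i.d.\ $\mathrm{Bern}(q^\ast)$ preamble ensemble that partitions the competing size-$k$ sets by the number $d$ of mismatched users, union-bounds over the $\binom{k}{d}\binom{\ell-k}{d}$ sets at each level, and reduces everything to whether the fully-disjoint level $d=k$ is binding, i.e.\ whether $\bar{I}_d/d\ge C/k$ for all $d<k$. The genuine gap is that you stop exactly at this inequality, label it ``the hard part,'' and assert it ``must be verified for the specific saturating exponential channel.'' It is never verified, so the achievability half is incomplete. The paper closes it with a short generic argument you are missing: expand $I_s=I\big(\textbf{X}({\mathcal{I}^0}_s);Z\mid \textbf{X}({\mathcal{I}^1}_s)\big)$ by the chain rule into the per-user contributions $H\big(\textbf{X}(j)\big)-H\big(\textbf{X}(j)\mid Z,\textbf{X}(1,\ldots,j-1)\big)$ as in (\ref{con}); because the preamble entries are i.i.d.\ (hence exchangeable) and conditioning reduces entropy, these contributions are nondecreasing in $j$, so $I_s$ is the sum of the $s$ \emph{largest} of the $k$ terms whose total is $I_k$, whence $I_s/s\ge I_k/k$ immediately. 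No property of the exponential law $p_v$ is used, and the ``masking'' effect you worry about (extra active users adding little once the pool is already positive) is exactly what the conditioning-reduces-entropy step absorbs, in the favorable direction. In short, the inequality is a structural consequence of the channel depending on $\textbf{X}$ only through its Hamming weight, not something requiring a channel-specific computation.

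A secondary issue: the claim that the pairwise confusion probability with a level-$d$ competitor ``decays like $2^{-n\bar{I}_d}$'' is not literally correct for ML between two fixed hypotheses (that exponent is a Chernoff/Bhattacharyya-type quantity, not a conditional mutual information). The paper instead runs the information-density threshold decoder of noisy group testing (Algorithm~\ref{alg:MYALG}) with thresholds $\eta_s=\log\frac{\rho}{k\binom{k}{s}\binom{\ell-k}{s}}$ and a concentration argument for the i.i.d.\ ensemble, which is the rigorous route to the condition $n\,\bar{I}_d>(1+o(1))\,d\log\ell$ that you want. With these two repairs your sketch coincides with the paper's proof.
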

\begin{proof} \textit{(Sketch)} Using standard capacity arguments\cite{10.5555/1146355}  along with Lemma $\ref{thm2th}$, it is evident that user identification cost in the non-coherent $(\ell,k)$-MnAC must obey $ n(\ell) \geq \frac{\log {\ell\choose k}}{C}$ where $C$ is as given in Lemma \ref{thm2th}.
Thus, in the $k =\Theta(1)$ regime we have
\footnotesize{\begin{equation} 
    n(\ell) \geq \frac{k\log(\ell)}{\max_{(\gamma,q)}h\Big(E\Big[e^{-\frac{\gamma}{V  \sigma^{2}P+\sigma_{w}^{2}}}\Big]\Big)-E\Big[h\Big(e^{-\frac{\gamma}{V  \sigma^{2}P+\sigma_{w}^{2}}}\Big)\Big]}.\label{equ}
\end{equation}} \normalsize
 
 Next, we present a decoder as shown in Algorithm \ref{alg:MYALG}  which can  achieve the equality  in $(\ref{equ})$ relying on classical thresholding techniques in information theory \cite{1057459}.  Due to the equivalence of active user identification and GT, we use an  adaptation of  the GT decoding strategy  in  \cite{6157065} to the context of user identification in  non-coherent $(\ell,k)$-MnAC.
\begin{algorithm}
\caption{Decoder achieving the minimum user identification cost in the $k=\Theta(1)$ regime. }\label{alg:cap}
 
\begin{algorithmic}[1]
\State \textbf{Given}: Preamble set, $\mathcal{P} \triangleq \{\textbf{X}^i: i\in \mathcal{D}\}$;
Detector outputs: $\textbf{Z}\triangleq(Z_1,\ldots, Z_n)$; $k$: Number of active devices.
\State \textbf{Initialize}: Set of all size-$k$ subsets of $\mathcal{D}$: $\mathcal{S} =\{\mathcal{K}_1,\mathcal{K}_2,\ldots,\mathcal{K}_{{\ell \choose k}}\}$;Estimate of activity status: $\hat{\boldsymbol{\beta}}=\mathbf{0}_{1 \times \ell};$  $\mathcal{A}_{est}=\Phi.$
\For{$i =1: \left|\mathcal{S}\right|$ }
\State  $\tilde{\mathcal{A}}= \mathcal{K}_i$; Number of threshold tests passed, $n_t=0$.
\For{$j =1:k$}
\State \small \hspace{-.3cm}$\Lambda^{\{j\}}=\left\{\left(\mathcal{I}^0, \mathcal{I}^1\right): \mathcal{I}^0 \bigcap \mathcal{I}^1=\emptyset, \mathcal{I}^0 \bigcup \mathcal{I}^1=\tilde{\mathcal{A}},\left|\mathcal{I}^0\right|=j\right\}$ 
\State $\eta_{j}=\log \frac{\rho}{k{k \choose j}{\ell-k \choose j}}$ for some $\rho >0$.
\If   {$\log \left( \frac{ p\left(\textbf{z} \mid \textbf{x}({\mathcal{I}^0}),\textbf{x}({\mathcal{I}^1})\right)}{ p\left(\textbf{z} \mid \textbf{x}({\mathcal{I}^1})\right)}\right) >\eta_{j}, \forall (\mathcal{I}^0,\mathcal{I}^1) \in \Lambda^{\{j\}}$}\normalsize
\State $n_t=n_t+{k \choose j}$.
\EndIf
\EndFor
\State \textbf{if }{$n_t =2^k-1,$}
$\mathcal{A}_{est}=\mathcal{A}_{est} \bigcup \tilde{\mathcal{A}}. $
\textbf{end if}
\EndFor
\If {$\left|\mathcal{A}_{est}\right|= k$},
  $\hat{\boldsymbol{\beta}}:\hat{{\beta_i}}=1, \forall i \in \mathcal{A}_{est}$, \textbf{else} \textbf{ error.} 
 \EndIf
\end{algorithmic}
\label{alg:MYALG} \vspace{0cm}
\end{algorithm} 
The algorithm operates as follows: Given the set of preambles $\mathcal{P}$ and the detector outputs $\textbf{Z}$, the decoder exhaustively searches for an approximate maximum likelihood (ML) size-$k$ set from  all possible ${\ell \choose k}$ subsets. The decoder declares $\mathcal{A}^*$ as the active  set iff
$
    p\left(\textbf{z} \mid \textbf{x}({\mathcal{A}^*})\right)>p\left(\textbf{z} \mid \textbf{x}({\tilde{\mathcal{A}} })\right) ;$  $\forall \tilde{\mathcal{A}}  \neq \mathcal{A}^*,$ where $\textbf{x}(\mathbf{\mathcal{A}})$ denotes $\{\textbf{x}^{a_i}:a_i\in \mathcal{A}\}$.  The corresponding activity status vector is  $\boldsymbol{\beta}^{*}:{{\beta_i}}=1, \forall i \in \mathcal{A}^*$. We perform this search  through a series of threshold tests for each candidate set $\tilde{\mathcal{A}} \in \mathcal{S}$. For each $\tilde{\mathcal{A}}$, we consider a partition $( \mathcal{I}^0, \mathcal{I}^1)$  of the candidate active set into its overlap and difference with the true active set as in step (6). i.e., $ \mathcal{I}^1:=\mathcal{A} \cap  \tilde{\mathcal{A}} $ and $ \mathcal{I}^0:=\tilde{\mathcal{A}} \setminus {\mathcal{A}} $. Then, we perform $2^k-1$ threshold tests as below. 
\begin{equation} \small
    \log \left( \frac{ p\left(\textbf{z} \mid \textbf{x}( \mathcal{I}^0),\textbf{x}( \mathcal{I}^1)\right)}{ p\left(\textbf{z} \mid \textbf{x}( \mathcal{I}^1)\right)}\right) >\eta_{s}, \forall ( \mathcal{I}^0, \mathcal{I}^1) \text{ partitions of } \tilde{\mathcal{A}}, \label{deco} 
\end{equation} \normalsize
for predetermined thresholds $\eta_{s}$ where $s:=| \mathcal{I}^0| \in \{1,2,\ldots,k\}$. We declare $\tilde{\mathcal{A}}$ as the  active user set if it is the only set passing all threshold tests. An error occurs if the active user set fails at least one of the tests in  (\ref{deco}) or an incorrect set passes all the tests.  
As shown in \cite{6157065}, using a set of thresholds given by $ \eta_{s}=\log \frac{\rho}{k{k \choose s}{\ell-k \choose s}}$,
 we can upper bound $\mathbb{P}_e^{(\ell)}$ as \footnotesize{\begin{equation*} 
    \mathbb{P}_e^{(\ell)} \leq \sum_{s=1}^{k}{k \choose s} \mathbb{P}\Bigg( \log \left( \frac{ p\left(\textbf{z} \mid \textbf{x}( \mathcal{I}^0),\textbf{x}( \mathcal{I}^1)\right)}{ p\left(\textbf{z} \mid \textbf{x}( \mathcal{I}^1)\right)}\right)\leq \log  \frac{\rho}{k{k \choose s}{\ell-k \choose s}}\Bigg)+\rho. \label{nonasym}
\end{equation*}} \normalsize

Since  preamble design uses i.i.d Bernoulli random variables, using concentration inequalities for the $k=\Theta(1)$ regime, we can achieve $\mathbb{P}_e^{(\ell)}\rightarrow 0,$ with $\ell \rightarrow \infty$ as long as $n(\ell)$ obeys
\begin{equation}
     n(\ell) \geq \max _{s=1, \ldots, k} \frac{\log{\ell-k \choose s}}{I(\textbf{X}({ \mathcal{I}^0}_s);Z \mid \textbf{X}({ \mathcal{I}^1}_s) )}(1+o(1)).\label{teq}
\end{equation} Here, $\textbf{X}({ \mathcal{I}^0}_s):=\{\textbf{X}({ \mathcal{I}^0}):| \mathcal{I}^0|=s\}$ and $\textbf{X}({ \mathcal{I}^1}_s):=\{\textbf{X}({ \mathcal{I}^1}):| \mathcal{I}^0|=s\}$.
Using the asymptotic expression
$
\log {\ell -k \choose s}=s \log \ell+O(1)
$ in the $k=\Theta(1)$ regime implies 
\begin{equation}
     n(\ell) \geq \max _{s=1, \ldots, k} \frac{s\log \ell}{I(\textbf{X}({ \mathcal{I}^0}_s);Z \mid \textbf{X}({ \mathcal{I}^1}_s) )}(1+o(1)).\label{teq1}
\end{equation}
We can rewrite $I_s:=I(\textbf{X}({ \mathcal{I}^0}_s);Z \mid \textbf{X}({ \mathcal{I}^1}_s) )$   as 
\begin{equation} \small
    I_s=\sum_{j=k-s+1}^{k}\Big(H\big(\textbf{X}(j)\big)-H\big(\textbf{X}(j) \mid Z, \mathbf{X}(1,2,..,j-1)\big)\Big) \label{con}.
\end{equation}Note that $\frac{I_s}{s}$ attains its minimum when $s=k$ since conditioning reduces entropy \cite{10.5555/1146355}. Thus, when $k=\Theta(1)$, (\ref{teq1}) simplifies to 
$ n(\ell) \geq \frac{k\log(\ell)}{I(\textbf{X};Z  )}(1+o(1)).$  Using the maximum rate of the equivalent channel $C$  in Lemma 1, we can conclude that,  our proposed decoder has $\mathbb{P}_e^{(\ell)} \rightarrow 0$ with $\ell \rightarrow \infty$ as long as 
 \footnotesize{\begin{equation*}
    n(\ell) \geq \frac{k\log(\ell)}{\max_{(\gamma,q)}h\Big(E\Big[e^{-\frac{\gamma}{V  \sigma^{2}P+\sigma_{w}^{2}}}\Big]\Big)-E\Big[h\Big(e^{-\frac{\gamma}{V  \sigma^{2}P+\sigma_{w}^{2}}}\Big)\Big]}(1+o(1)),
\end{equation*}} \normalsize
completing the proof. 
\end{proof}
\section{Belief Propagation with Soft Thresholding}

The decoder in Algorithm \ref{alg:MYALG} achieving the minimum user identification cost is impractical as it requires an exhaustive search over all candidate active device sets which is computationally infeasible. In this section, we present a practical scheme for user identification based on BP which is well-known to have excellent performance in compressed sensing problems including LDPC decoding over noisy channels\cite{1495850}. While BP has been extensively studied in the context of  sparse recovery \cite{8849288,https://doi.org/10.48550/arxiv.1102.3289}, to the best of our knowledge, it was not used for  user identification in non-coherent fast fading MnAC using On-Off signaling. Although, BP  has been proposed for noisy GT decoding as a practical approach  \cite{sejdinovic2010note}, this was for a simple channel model involving the addition and dilution noise models. In contrast, our noise model in $(\ref{channeleq})$ incorporating channel fading  effects and non-coherent detection at the BS side is more intricate. Due to the equivalence between active device identification and GT, we use  a similar BP approach for our active device identification problem in  non-coherent MnAC since the distribution of the binary energy detector outputs after each channel-use is only dependent on the number of active devices involved in that channel-use and is not dependent on their specific identities.

Given the preamble set $\mathcal{P}$ and  binary detector outputs $\textbf{Z}=(Z_1,\ldots, Z_n)$, BP relies on approximate MAP decoding  using the marginals of the posterior distribution to identify the activity status vector $\boldsymbol{\beta}$\cite{sejdinovic2010note}. i.e.,
	   $ \widehat{\beta_i}=\underset{\beta_i \in\{0,1\}}{\arg\max }\hspace{2pt} p\left(\beta_i \mid \mathbf{Z}\right).$   It can be rewritten as 
\begin{equation}
 \hat{\beta}_i= \underset{\beta_i \in\{0,1\}}{\arg\max } \sum_{\sim \beta_i}\left[\prod_{t=1}^n p\left(Z_t \mid \boldsymbol{\beta}\right) \prod_{j=1}^\ell p\left(\beta_j\right)\right].
 \label{bpexp} 
\end{equation} Eqn. (\ref{bpexp}) can be approximately computed using loopy belief propagation based on a bipartite graph with $\ell$ devices at one side and the $n$ energy detector  outputs at the other side. The priors $p\left(\beta_j\right)$'s are initialized to $\frac{k}{\ell}$ and updated through message passing in further iterations.
Note that $p(Z_t \mid \boldsymbol{\beta})=p\left(Z_t \mid \sum\limits_{{i\in \mathcal{A}; X_t^i =\sqrt{P}}}\beta_{i}\right)$, since $Z_t$ depends on $\boldsymbol{\beta}$ only through  the number of active devices transmitting `On' symbols during the $t^{th}$ channel-use. This symmetry due to our i.i.d fast fading random access channel model leads to efficient computation of BP message update rules \cite{sejdinovic2010note}. We adopt BP with Soft Thresholding (BP-ST) strategy in which the $k$ users with the highest marginals are classified as active.  

We conducted simulations to evaluate the performance of BP-ST and its gap w.r.t. the minimum user identification cost. We considered $\ell =1000$ devices out of which $k=25$ are active. We used 20 message passing iterations. The SNR was varied from  0 to 10 dB by tuning the fading statistics $\sigma^2$ while keeping $P$ and $\sigma_w^2$ fixed at 0 dB. For a given SNR, we noted that other choices of the parameters $P, \sigma^2$ and  $\sigma_w^2$ lead to similar performance. Fig. \ref{bpst} shows that as the SNR  increases, devices are identified with fewer channel-uses. In Fig. \ref{fig:z4}, we compare the minimum user identification cost in Theorem 1 with the performance of BP-ST. Evidently, the gap between the BP-ST curve and the minimum user identification cost reduces notably in the high SNR regime. We also observed that (excluded for brevity)  BP can approach the minimum user identification cost if we relax the success criterion slightly by allowing room for few misdetections.
\begin{figure}   
  \centering
	\begin{tikzpicture}
  \sbox0{\includegraphics[width=.8\linewidth,height=53mm,trim={1.4cm 0.8cm 0cm 0},clip]{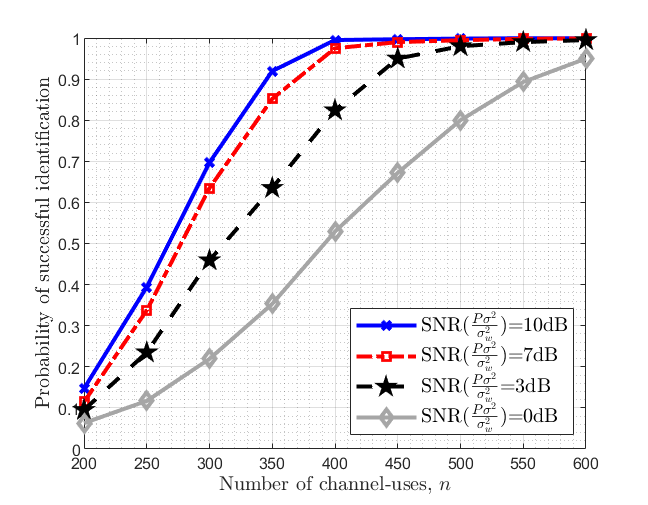}}
  \node[above right,inner sep=0pt] at (0,0)  {\usebox{0}};
  \node[black] at (0.5\wd0,-0.06\ht0) {\footnotesize{Number of channel-uses, $n$}};
   \node[black] at (0.75\wd0,0.186\ht0) {\scriptsize{)}};
  \node[black,rotate=90] at (-0.04\wd0,0.5\ht0) {\footnotesize{$\mathbb{P}$(success)}};
\end{tikzpicture}
	\setlength\abovecaptionskip{0pt}
	\setlength{\belowcaptionskip}{-15pt} 
	\caption{BP-ST for $(1000,25)$-MnAC.} 
	\label{bpst}
	\end{figure}

\begin{figure}   
	\centering
	\begin{tikzpicture}
  \sbox0{\includegraphics[width=.8\linewidth,height=53mm,trim={1.2cm 0.8cm 0 0},clip]{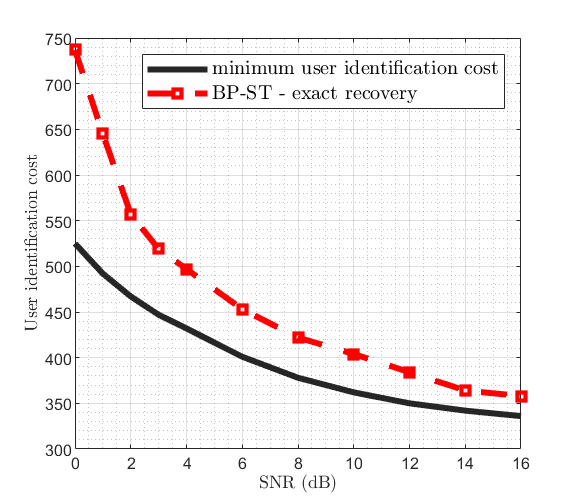}}
  \node[above right,inner sep=0pt] at (0,0)  {\usebox{0}};
  \node[black] at (0.5\wd0,-0.06\ht0) {\footnotesize{SNR (dB)}};
  \node[black,rotate=90] at (-0.04\wd0,0.5\ht0) {\footnotesize{ Number of channel-uses}};
\end{tikzpicture}
	\setlength\abovecaptionskip{0pt}
	\setlength{\belowcaptionskip}{-15pt} 
	\caption{ Minimum user identification cost versus BP-ST for $(1000,25)$-MnAC. }
	\label{fig:z4}
	\end{figure}
	\section{Conclusion}
	In this paper, we have proposed an active user discovery framework for massive random access based on On-Off preamble transmission and non-coherent energy detection at the receiver. We have evaluated the performance of our proposed scheme  in terms of the minimum number of channel-uses required for reliable active user identification. We have established the  minimum user identification cost and an achievability scheme in the $k=\Theta(1)$ regime. Furthermore, we have proposed BP-ST, a practical scheme for device identification based on belief propagation.

\bibliographystyle{IEEEtran} 
\bibliography{THz_scatter}
\end{document}